\documentclass[conference]{IEEEtran}
\usepackage{cite}
\usepackage{amsmath,amssymb,amsfonts,mathtools}
\usepackage{graphicx}
\usepackage{textcomp}
\usepackage{xcolor}
\usepackage{url}
\usepackage{microtype}
\usepackage{hyperref}

\newtheorem{theorem}{Theorem}
\newtheorem{proposition}{Proposition}
\newtheorem{corollary}{Corollary}
\newtheorem{assumption}{Assumption}
\newtheorem{remark}{Remark}

\newcommand{\R}{\mathbb{R}}

\newcommand{\E}{\mathbb{E}}
\newcommand{\Prob}{\mathbb{P}}
\newcommand{\1}{\mathbf{1}}
\newcommand{\abs}[1]{\lvert #1\rvert}
\newcommand{\Leb}{\mathrm{Leb}}

\begin{document}

\title{Arbitrage and the Stability of AMM Price Tracking}

\author{Peihao Li, Nadia Dahmani, and Wenqi Cai%
\thanks{This work was not supported by any organization}%
\thanks{P. Li is with the Computer, Electrical and Mathematical Science and Engineering Division (CEMSE), King Abdullah University of Science and Technology, Kingdom of Saudi Arabia {\tt\small peihao.li@kaust.edu.sa}}%
\thanks{N. Dahmani is with the College of Technological Innovation, Zayed University, United Arab Emirates {\tt\small Nadia.Dahmani@zu.ac.ae}}%
\thanks{W. Cai is with New York University Abu Dhabi {\tt\small wenqi.cai@nyu.edu}}%
}

\maketitle

\begin{abstract}
Automated market makers (AMMs) quote prices from pool state rather than from a limit order book. AMM pools often stay close to a reference price because arbitrageurs correct profitable mispricing. A large part of decentralized finance therefore relies on a simple economic premise: once the AMM price drifts away from the reference price, arbitrage incentives push it back. This paper studies when that premise is strong enough to guarantee block-scale stability. We model the gap between the reference price and the AMM price as a stochastic tracking error, treat arbitrage as the corrective input, and place blockchain execution inside the loop through fees, discrete blocks, transaction ordering, delays, and transaction failure. The detailed execution layer is reduced to the total successful correction confirmed in each block. Under a block-level correction condition, we prove geometric ergodicity of the tracking error and obtain explicit one-step bounds that connect tracking quality to liquidity and execution quality. We also show in a constant-product example how fees, fixed execution costs, and local liquidity map into the no-trade band and the optimal corrective trade. Finally, we build empirical proxies for the theorem quantities from realized block data and use them to organize reduced and mechanism-focused simulations whose comparative statics are consistent with the theory. The contribution is to turn a basic economic intuition behind decentralized finance into a quantitative stability statement together with a tractable calibration interface.
\end{abstract}

\section{Introduction}
\label{sec:intro}

Automated market makers quote prices from pool state rather than from a limit order book, so movements in the reference price create deviations in the AMM quote that arbitrage trades subsequently correct. Much of decentralized finance therefore rests on a simple premise: if the AMM price moves away from the reference price, arbitrage incentives pull it back. This premise is not a side issue. It underlies the practical use of AMMs as trading infrastructure, as price signals, and as the benchmark behind liquidity provider loss calculations \cite{angeris2021analysis,angeris2020oracles,milionis2022lvr}. On a blockchain, however, correction occurs at block scale rather than continuously, and both its timing and magnitude are affected by liquidity, costs, transaction ordering, delay, and transaction failure. In concentrated-liquidity pools, active liquidity varies across price regions, so the strength of correction is state dependent \cite{uniswapv3whitepaper,hashemseresht2022cl,heimbach2022distress,fritsch2024measuring,zhu2024liquidity}.

Existing work does not give the result needed here. Early AMM theory shows that constant function market makers can closely track a reference price under common conditions and can support oracle correctness \cite{angeris2021analysis,angeris2020oracles}. Other work studies arbitrage profits, fee effects, liquidity provider losses, and stylized AMM price dynamics \cite{milionis2022lvr,milionis2023fees,najnudel2024arbdriven,canidio2023batching,angeris2023geometry,zang2026dynamic}. A separate literature shows that MEV, frontrunning, latency races, non-atomic arbitrage, execution priority, and transaction failure determine how much correction is actually realized on chain \cite{daian2020flashboys,torres2021frontrunner,zhou2021hft,adams2024mevslip,heimbach2024nonatomic,fritsch2024timeadv,gogol2024l2arb,gogol2025priorityfails,hautsch2024trust,capponi2026price,adadurov2026subslots}. What is missing, to the best of our knowledge, is a block-scale stochastic closed-loop model that joins these views and proves that the tracking error remains stable once blockchain execution is part of the loop.

This paper fills that gap. We study AMM price tracking at block scale, with arbitrage as the corrective input and blockchain execution inside the loop. The execution layer is summarized by the total successful correction delivered in each block. That summary keeps the features that matter for tracking, including profitability thresholds, local liquidity, ordering, delays, and transaction failure, while avoiding unnecessary detail in the state. Within this framework, we prove sufficient conditions for stochastic stability of the tracking error and derive explicit one-step bounds that connect the rate of correction to liquidity and execution quality. We also show in a constant-product example how fees, fixed execution costs, and reserve depth map into the economic no-trade band and the optimal corrective trade. Finally, we build empirical proxies for the theorem quantities from realized block data and use them to organize reduced and mechanism-focused simulations. The main contribution is therefore a stability theorem for block-scale AMM price tracking together with a tractable way to interpret its parameters in data and in simulation, without claiming a sharp empirical validation of the theorem itself.

\paragraph*{Code availability.}
Code and experiment scripts are available at \url{https://github.com/P-HOW/AMM_price_stability}.

\section{Closed-Loop Model at Block Scale}
\label{sec:model}

We work with log-prices at the block scale. Let $P_n$ denote the reference log-price observed immediately before block $n$ is executed, and let $\tilde P_n$ denote the AMM log-price after the transactions in block $n$ have been processed. The reference signal evolves according to
\begin{equation}
    P_{n+1}=P_n+w_{n+1},
    \label{eq:refprice}
\end{equation}
where $\{w_n\}$ is an exogenous disturbance sequence.

Define the pre-execution and post-execution tracking errors by
\begin{equation}
    x_n \coloneqq P_n-\tilde P_{n-1},
    \qquad
    z_n \coloneqq P_n-\tilde P_n.
    \label{eq:mispricingdefs}
\end{equation}
Then
\begin{equation}
    x_{n+1}=z_n+w_{n+1}.
    \label{eq:stateupdate}
\end{equation}
Thus $x_n$ is the gap seen by arbitrageurs at the start of block $n$, while $z_n$ is the gap left after the arbitrage transactions included in that block have either executed or reverted.

Let $\mathbf{s}_n\in\mathcal S$ denote the pool state and let $g_n\in\mathcal G$ denote the realized block execution environment. We write $\Theta_n=(g_n,\ell_n)$ for the pair formed by execution conditions and local liquidity. To quantify how strongly the pool resists corrective trading, let $P_{\mathrm{exec},s}(q;\mathbf{s})$ denote the effective execution price of a small trade of size $q$ in corrective direction $s\in\{-1,+1\}$ when the pool state is $\mathbf{s}$. We let $\ell_n>0$ denote any measurable local liquidity proxy that is monotone with inverse local price impact, for example
\begin{equation}
    \ell_n
    \coloneqq
    \left|
        \frac{\partial P_{\mathrm{exec},s}}{\partial q}(0;\mathbf{s}_n)
    \right|^{-1}
    \label{eq:liqproxy}
\end{equation}
whenever this derivative exists. In constant-product pools $\ell_n$ is proportional to reserve depth. In concentrated-liquidity pools it is proportional to active in-range liquidity. The role of $\ell_n$ is purely local: it captures the correction authority of marginal arbitrage around the current state.

A corrective trade is only worth submitting when the gap is large enough to clear fees, price impact, and execution costs. We write $\gamma(\ell,g)$ for a measurable band parameter that summarizes the corresponding economic no-trade radius at local liquidity level $\ell$ and execution condition $g$. The theorem below only needs $\gamma$ as a measurable band parameter. The next proposition shows that, in a canonical AMM family, this band is explicit and quantitative.

\begin{proposition}[Constant-product example]
\label{prop:cpband}
Consider a two-asset constant-product AMM with reserves $(x,y)$, invariant $xy=k$, fee multiplier $\eta\in(0,1)$, pool price
\[
    P\coloneqq \frac{y}{x},
\]
external reference price $P^\star$, and fixed numeraire execution cost $c_f(g)\ge 0$. Let direction $+1$ denote the corrective trade that inputs asset $0$ into the pool, and direction $-1$ the corrective trade that inputs asset $1$ into the pool. Then the optimal directional inputs are
\begin{align}
    q_{+1}^\star
    &=
    \left(
        \frac{\sqrt{\eta x y/P^\star}-x}{\eta}
    \right)_+,
    \label{eq:qplusstar}
    \\
    q_{-1}^\star
    &=
    \left(
        \frac{\sqrt{\eta x y P^\star}-y}{\eta}
    \right)_+,
    \label{eq:qminusstar}
\end{align}
and the corresponding optimal directional profits are
\begin{align}
    H_{+1}(P^\star;x,y,g)
    &=
    \left(
        \sqrt{y}-\sqrt{\frac{xP^\star}{\eta}}
    \right)_+^2
    - c_f(g),
    \label{eq:Hpluscp}
    \\
    H_{-1}(P^\star;x,y,g)
    &=
    \left(
        \sqrt{xP^\star}-\sqrt{\frac{y}{\eta}}
    \right)_+^2
    - c_f(g),
    \label{eq:Hminuscp}
\end{align}
In particular, when $c_f(g)=0$, profitable arbitrage exists if and only if
\[
    P^\star \notin [\eta P, P/\eta].
\]
Equivalently, the fee-only no-trade radius in log-price units is
\begin{equation}
    \gamma_0 = -\log \eta.
    \label{eq:feeonlyband}
\end{equation}
With $c_f(g)>0$, the no-trade radius is the unique root of $H_{+1}=0$ or $H_{-1}=0$ in the active direction, hence is still finite and quantitative.
\end{proposition}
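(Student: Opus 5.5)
The plan is a direct first-order optimization in each direction, done separately. Throughout, profits are measured in units of asset $1$, so $P=y/x$ is the pool price of asset $0$ and $P^\star$ is its reference price. A fee multiplier $\eta$ means an input of $q$ is credited to the invariant as $\eta q$.

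\emph{Direction $+1$.} Inputting $q\ge 0$ of asset $0$ yields an output of asset $1$ equal to
\[
\Delta y(q)=y-\frac{xy}{x+\eta q}=\frac{\eta q\,y}{x+\eta q}.
\]
The gross profit is $\Pi_{+1}(q)=\Delta y(q)-P^\star q$, and the net profit subtracts $c_f(g)$ only if a trade is actually submitted. I will first check that $\Pi_{+1}$ is strictly concave on $q\ge 0$, since $\Delta y''(q)=-2\eta^2 xy/(x+\eta q)^3<0$. Hence the constrained maximizer is the positive part of the stationary point. The first-order condition $\eta xy/(x+\eta q)^2=P^\star$ gives
\[
x+\eta q=\sqrt{\eta xy/P^\star},
\]
which is \eqref{eq:qplusstar}. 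Substituting back, the output is $y-\sqrt{xyP^\star/\eta}$ and the cost is $P^\star q^\star=\sqrt{xyP^\star/\eta}-xP^\star/\eta$. Their difference collapses to $\bigl(\sqrt y-\sqrt{xP^\star/\eta}\bigr)^2$. Moreover, $q^\star_{+1}>0$ holds exactly when $\sqrt y>\sqrt{xP^\star/\eta}$, so the positive part in \eqref{eq:Hpluscp} correctly returns $0$ when no trade is optimal. This gives \eqref{eq:Hpluscp}.

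\emph{Direction $-1$.} This is symmetric. Inputting $q$ of asset $1$ returns $\eta q\,x/(y+\eta q)$ of asset $0$, valued at $P^\star$, so
\[
\Pi_{-1}(q)=P^\star\frac{\eta q x}{y+\eta q}-q .
\]
The same concavity and stationary-point computation, with $x\leftrightarrow y$ and $P^\star\leftrightarrow 1/P^\star$ (after dividing through by $P^\star$), yields \eqref{eq:qminusstar} and \eqref{eq:Hminuscp}.

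\emph{Fee-only band.} Set $c_f=0$. Direction $+1$ is strictly profitable iff $y>xP^\star/\eta$, i.e.\ $P^\star<\eta P$. Direction $-1$ is strictly profitable iff $xP^\star>y/\eta$, i.e.\ $P^\star>P/\eta$. Therefore no profitable arbitrage exists iff $P^\star\in[\eta P,P/\eta]$. Taking logs, this is $\abs{\log P^\star-\log P}\le-\log\eta$, which gives \eqref{eq:feeonlyband}.

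\emph{Positive fixed cost.} Set $c_f>0$ and take, say, the active direction $-1$. On the region $P^\star>P/\eta$, the map $P^\star\mapsto\bigl(\sqrt{xP^\star}-\sqrt{y/\eta}\bigr)^2$ is continuous and strictly increasing. It starts at $0$ when $P^\star=P/\eta$ and tends to $\infty$ as $P^\star\to\infty$. By the intermediate value theorem and strict monotonicity, $H_{-1}=0$ therefore has exactly one root, and the no-trade radius is finite. The root is even explicit: $\sqrt{xP^\star}=\sqrt{y/\eta}+\sqrt{c_f}$. Direction $+1$ is handled the same way on $P^\star<\eta P$, using monotone decrease in $P^\star$ as $P^\star\downarrow 0$; the limit is $y>c_f$ in that case, so a root exists when $c_f<y$. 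I would note this caveat explicitly.

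The main obstacle is bookkeeping rather than analysis. One must fix consistently the fee convention (fee on input), the numeraire, and the interpretation of the positive part. In particular, the formulas \eqref{eq:Hpluscp} and \eqref{eq:Hminuscp} must be read so that the squared term is taken only where the base is positive; otherwise the identities would fail outside the active region.
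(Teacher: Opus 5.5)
Your proposal is correct and follows the same route as the paper: maximize the strictly concave directional profit via the first-order condition, substitute the optimizer back, and read the fee-only band off the positive parts. Your concavity check and your monotonicity argument for the $c_f(g)>0$ claim fill in steps the paper leaves implicit, and your caveat is right: the direction $+1$ profit never exceeds $y-c_f(g)$, so a finite lower radius requires $c_f(g)<y$, a qualification the paper's statement omits.
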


\begin{proof}
For direction $+1$, the output of a token-$0$ input $q$ is
\[
    \frac{y\eta q}{x+\eta q},
\]
so the numeraire profit is
\[
    \Pi_{+1}(q)=\frac{y\eta q}{x+\eta q}-P^\star q-c_f(g).
\]
Differentiating and optimizing over $q\ge 0$ yields \eqref{eq:qplusstar}. Substituting the optimizer back gives \eqref{eq:Hpluscp}. The computation for direction $-1$ is analogous, using the output
\[
    \frac{x\eta q}{y+\eta q}
\]
valued at the reference price $P^\star$. The fee-only threshold follows by setting $c_f(g)=0$ and checking when the positive part in \eqref{eq:Hpluscp} or \eqref{eq:Hminuscp} is strictly positive.
\end{proof}

\begin{remark}
Proposition~\ref{prop:cpband} is included for two reasons. First, it shows that the economic no-trade radius is explicit in a canonical AMM family. Second, it provides a direct way to compute the predicted optimal arbitrage size and trigger condition block by block from reserve data and execution costs. Those quantities will also be used later in a custom mechanism-focused simulation.
\end{remark}

For a submitted transaction, the executable residual band is the larger of its economic threshold and its explicit slippage guard. If transaction $r$ was prepared under submission state $(\ell_{n,r}^{\mathrm{sub}},g_{n,r}^{\mathrm{sub}})$ and carries a user-specified slippage tolerance $\zeta_{n,r}\ge 0$, define
\begin{equation}
    \gamma_{n,r}
    \coloneqq
    \max\left\{
        \gamma(\ell_{n,r}^{\mathrm{sub}},g_{n,r}^{\mathrm{sub}}),
        \zeta_{n,r}
    \right\}.
    \label{eq:gammar}
\end{equation}
Thus a transaction can execute only if enough residual discrepancy remains when it reaches its realized block position to keep it both profitable and within slippage protection. This is exactly the quantity that appears later in the sequential execution rule.

Let $\mathcal M$ denote the collection of finite ordered lists of triples $(s,u,\gamma)$ with $s\in\{-1,+1\}$, $u>0$, and $\gamma\ge 0$. Once block $n$ is realized, we summarize the included arbitrage transactions by an ordered list
\[
    \mathcal M_n=
    \big((s_{n,1},u_{n,1},\gamma_{n,1}),\dots,(s_{n,K_n},u_{n,K_n},\gamma_{n,K_n})\big)\in\mathcal M.
\]
Here $K_n$ is the number of included arbitrage transactions, $s_{n,r}$ is the corrective direction, $u_{n,r}$ is the realized amount by which the transaction would reduce the current error magnitude if it succeeds, and $\gamma_{n,r}$ is the transaction's executable residual band.

\paragraph*{Sequential execution rule.}
Fix a realized pre-execution error $x_n$. Define recursively
\begin{equation}
    R_{n,0}\coloneqq x_n,
    \label{eq:R0}
\end{equation}
and for $r=1,\dots,K_n$,
\begin{align}
    \chi_{n,r}
    &\coloneqq
    \1\!\left\{
        s_{n,r}R_{n,r-1}>0,\
        \abs{R_{n,r-1}}\ge \gamma_{n,r}+u_{n,r}
    \right\},
    \label{eq:successindicator}
    \\
    R_{n,r}
    &\coloneqq
    R_{n,r-1}-\chi_{n,r}s_{n,r}u_{n,r}.
    \label{eq:sequentialupdate}
\end{align}
The indicator $\chi_{n,r}$ equals one exactly when the included transaction points in the same direction as the remaining error and the remaining error is large enough to satisfy the transaction's executable residual band. Otherwise the transaction reverts and the tracking error does not change.

Given an initial error $x\in\R$ and an ordered list $M\in\mathcal M$, we write $\Psi(x,M)$ for the final error after applying the same recursion and $C(x,M)$ for the total successful correction delivered by that list. In particular,
\begin{equation}
    z_n\coloneqq \Psi(x_n,\mathcal M_n),
    \qquad
    C_n\coloneqq C(x_n,\mathcal M_n)=\sum_{r=1}^{K_n}\chi_{n,r}u_{n,r}.
    \label{eq:postoperator}
\end{equation}
The scalar $C_n$ is the realized correction that matters for the block-scale tracking loop.

The next proposition is the key reduction step. It shows that the full ordered execution layer enters the tracking dynamics only through the realized correction $C_n$ and a uniform dead-zone cap.

\begin{proposition}
\label{prop:servicebound}
Suppose there exists a finite constant $\bar\gamma$ such that
\begin{equation}
    \gamma_{n,r}\le \bar\gamma
    \label{eq:bandcap}
\end{equation}
almost surely for every included transaction. Then for every block $n$,
\begin{equation}
    \bigl(\abs{z_n}-\bar\gamma\bigr)_+
    \le
    \Big(\bigl(\abs{x_n}-\bar\gamma\bigr)_+ - C_n\Big)_+.
    \label{eq:excessservice}
\end{equation}
In particular,
\begin{equation}
    \abs{z_n}\le \abs{x_n}.
    \label{eq:monotonebound}
\end{equation}
\end{proposition}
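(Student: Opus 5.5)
The plan is to prove the bound by induction along the sequential recursion \eqref{eq:R0}--\eqref{eq:sequentialupdate}, tracking the excess quantity $E_r\coloneqq(\abs{R_{n,r}}-\bar\gamma)_+$ together with the partial correction $S_r\coloneqq\sum_{j\le r}\chi_{n,j}u_{n,j}$. The claim to establish is that for every $r=0,\dots,K_n$,
\[
    E_r \le \bigl(E_0 - S_r\bigr)_+ ,
    \qquad
    \abs{R_{n,r}}\le\abs{R_{n,r-1}} \ (r\ge1).
\]
Taking $r=K_n$, with $R_{n,K_n}=z_n$ and $S_{K_n}=C_n$, gives \eqref{eq:excessservice}. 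Chaining the monotonicity gives \eqref{eq:monotonebound}.

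The key one-step fact concerns a successful transaction, $\chi_{n,r}=1$. Here $s_{n,r}R_{n,r-1}>0$ and $\abs{R_{n,r-1}}\ge\gamma_{n,r}+u_{n,r}\ge u_{n,r}$. Since the correction points toward zero and does not overshoot, $R_{n,r}$ keeps the sign of $R_{n,r-1}$ and satisfies
\[
    \abs{R_{n,r}}=\abs{R_{n,r-1}}-u_{n,r}\ge\gamma_{n,r}\ge0 .
\]
This is the step where the executable band matters. It guarantees the error magnitude drops by exactly $u_{n,r}$, with no sign flip that would otherwise turn $\abs{R}$ into something not controlled by $u$. The monotonicity $\abs{R_{n,r}}\le\abs{R_{n,r-1}}$ is immediate in both cases, since a reverted transaction leaves $R$ unchanged.

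For the excess bound we use the elementary inequality $(a-u)_+-\bar\gamma\le\ldots$, more precisely
\[
    (a-u-\bar\gamma)_+\le\bigl((a-\bar\gamma)_+-u\bigr)_+ \qquad (a\ge0,\ u\ge0).
\]
To check it, note that if $a\ge\bar\gamma$ the two sides coincide, and if $a<\bar\gamma$ both sides vanish. On success this gives $E_r\le(E_{r-1}-u_{n,r})_+$. On failure it gives $E_r=E_{r-1}$ with $S_r=S_{r-1}$. The induction step is then the composition estimate
\[
    \bigl((E_0-S_{r-1})_+-u\bigr)_+=(E_0-S_{r-1}-u)_+ ,
\]
valid for $u\ge0$, together with monotonicity of $t\mapsto(t-u)_+$.

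The only subtle point is that the cap \eqref{eq:bandcap} is not needed for the success step itself; the identity $\abs{R_{n,r}}=\abs{R_{n,r-1}}-u_{n,r}$ uses only $\gamma_{n,r}\ge0$. What I expect to be the main obstacle is making sure nothing in the argument silently assumes $C_n$ counts corrections that occurred while $\abs{R}<\bar\gamma$. Such corrections can happen, since $\gamma_{n,r}$ may be smaller than $\bar\gamma$. The bound is still true in that case because its right-hand side is clipped at zero, and the inequality $E_r\le(E_{r-1}-u)_+$ already handles it. I would double-check this case explicitly in the induction.
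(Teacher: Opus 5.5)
Your proof is correct and is essentially the paper's argument: an induction along the sequential recursion using the fact that each successful transaction lowers $\abs{R_{n,r}}$ by exactly $u_{n,r}$. Your invariant $E_r\le(E_0-S_r)_+$ is equivalent to the paper's $\abs{R_{n,r}}\le\bar\gamma+(e_0-S_r)_+$, since the right-hand side is nonnegative. Your side remark is right, and in fact the cap $\gamma_{n,r}\le\bar\gamma$ is never used anywhere: the exact identity $\abs{z_n}=\abs{x_n}-C_n$ together with your elementary positive-part inequality already gives the bound.
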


\begin{proof}
Write $e_0\coloneqq (\abs{x_n}-\bar\gamma)_+$. We claim that for each $r=0,1,\dots,K_n$,
\begin{equation}
    \abs{R_{n,r}}
    \le
    \bar\gamma+
    \Big(e_0-\sum_{j=1}^{r}\chi_{n,j}u_{n,j}\Big)_+.
    \label{eq:inductionclaim}
\end{equation}
For $r=0$ this is immediate from $R_{n,0}=x_n$. Assume it holds at step $r-1$. If $\chi_{n,r}=0$, then $R_{n,r}=R_{n,r-1}$ and the bound is unchanged. If $\chi_{n,r}=1$, then $s_{n,r}R_{n,r-1}>0$ and $\abs{R_{n,r-1}}\ge \gamma_{n,r}+u_{n,r}\ge u_{n,r}$, so the transaction removes exactly $u_{n,r}$ units of error magnitude:
\[
    \abs{R_{n,r}}=\abs{R_{n,r-1}}-u_{n,r}.
\]
Using the induction hypothesis at step $r-1$ and the inequality $(a)_+-u\le (a-u)_+$ for $u\ge 0$ gives
\[
    \abs{R_{n,r}}
    \le
    \bar\gamma+
    \Big(e_0-\sum_{j=1}^{r}\chi_{n,j}u_{n,j}\Big)_+.
\]
This proves \eqref{eq:inductionclaim}. Setting $r=K_n$ and using \eqref{eq:postoperator} yields \eqref{eq:excessservice}. The monotonicity bound \eqref{eq:monotonebound} follows directly from \eqref{eq:sequentialupdate}: each successful trade reduces $\abs{R_{n,r}}$, and each failed trade leaves it unchanged.
\end{proof}

The proposition isolates the two quantities that remain in the theorem: the dead-zone cap $\bar\gamma$ and the realized block correction $C_n$. Everything else in the execution layer matters only through how it shapes those two objects.

\section{Main Stability Result}
\label{sec:stability}

\subsection{Assumptions}

The theorem only needs a block-level closure of the dynamics, not a full model of the searcher game.

\begin{assumption}[Block-level closure and dead-zone cap]
\label{ass:model}
The process $\{x_n\}$ is generated as follows. Given the current error $x_n=x$, the realized block environment $\Theta_n=(g_n,\ell_n)\in\mathcal G\times(0,\infty)$ is drawn from a time-homogeneous stochastic kernel $\mathsf K(\cdot\mid x)$. Given $(x_n,\Theta_n)=(x,g,\ell)$, the ordered arbitrage list $\mathcal M_n$ is drawn from a time-homogeneous kernel $\mathsf Q(\cdot\mid x,g,\ell)$. The post-execution error is $z_n=\Psi(x_n,\mathcal M_n)$ and the next state satisfies \eqref{eq:stateupdate}. In addition, there exists a finite constant $\bar\gamma$ such that \eqref{eq:bandcap} holds almost surely, so Proposition~\ref{prop:servicebound} applies in every realized block.
\end{assumption}

\begin{assumption}[Disturbance]
\label{ass:noise}
The innovations $\{w_n\}$ are independent and identically distributed, admit a continuous density $f$ on $\R$ that is strictly positive everywhere, and satisfy
\[
    \E\big[e^{\eta \abs{w_1}}\big] < \infty
\]
for some $\eta>0$.
\end{assumption}

\begin{assumption}[Large-error correction condition]
\label{ass:service}
There exist constants $x_{\star}>\bar\gamma$, $\lambda_{\star}\in(0,1]$, and $p_{\star}\in(0,1]$ such that whenever $\abs{x_n}\ge x_{\star}$,
\begin{equation}
    \Prob\Big(C_n \ge \lambda_{\star}\big(\abs{x_n}-\bar\gamma\big)\ \Big|\ x_n\Big)
    \ge p_{\star}.
    \label{eq:serviceassumption}
\end{equation}
\end{assumption}

Assumption~\ref{ass:service} is the economic heart of the paper. It says that once the gap is sufficiently large, arbitrage does not need to clear all of it in every block. It is enough that, with probability at least $p_{\star}$, the realized block correction removes at least a fraction $\lambda_{\star}$ of the gap outside the dead zone.

\subsection{Markov Properties}

\begin{proposition}
\label{prop:smallset}
Under Assumptions~\ref{ass:model} and \ref{ass:noise}, the tracking error process $\{x_n\}$ is a time-homogeneous, Lebesgue-irreducible, strongly aperiodic Markov chain. Moreover, every compact interval is one-step small.
\end{proposition}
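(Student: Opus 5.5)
The plan is to write the one-step transition kernel explicitly and derive every property from the additive noise structure $x_{n+1}=z_n+w_{n+1}$ together with the monotonicity bound \eqref{eq:monotonebound} of Proposition~\ref{prop:servicebound}.

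\emph{Markov property.} Under Assumption~\ref{ass:model}, given $x_n=x$ the pair $(\Theta_n,\mathcal M_n)$ is drawn from the time-homogeneous kernels $\mathsf K(\cdot\mid x)$ and $\mathsf Q(\cdot\mid x,g,\ell)$. Then $z_n=\Psi(x,\mathcal M_n)$ is a measurable function of $(x,\mathcal M_n)$; this needs a short check that the recursion \eqref{eq:successindicator}--\eqref{eq:sequentialupdate} is a finite composition of measurable operations on each $K$-component of $\mathcal M$. Let $\mu_x$ denote the resulting conditional law of $z_n$. I read Assumption~\ref{ass:noise} as saying that $w_{n+1}$ is independent of $x_n$ and of the block-$n$ randomness, since this is implicit in the model. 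Then
\[
    P(x,A)=\int \mu_x(dz)\int_A f(y-z)\,dy ,
\]
which does not depend on $n$. So $\{x_n\}$ is a time-homogeneous Markov chain, and $P(x,\cdot)$ has density $p(x,y)=\int f(y-z)\,\mu_x(dz)$.

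\emph{Minorization.} Fix a compact interval $D=[-a,a]$. By \eqref{eq:monotonebound}, $x\in D$ implies $\abs{z}\le a$ almost surely, so $\mu_x$ is supported on $[-a,a]$. For any compact $B$, continuity and strict positivity of $f$ give
\[
    \delta_B\coloneqq\min\{f(u):\abs{u}\le a+\sup_{y\in B}\abs{y}\}>0 .
\]
Hence $p(x,y)\ge\delta_B$ for all $x\in D$ and $y\in B$. Taking $B=D$ and $\nu=\Leb(\cdot\cap D)/(2a)$ yields
\[
    P(x,\cdot)\ge 2a\,\delta_D\,\nu(\cdot)\qquad\text{for all }x\in D ,
\]
so every compact interval is one-step small.

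\emph{Irreducibility and aperiodicity.} For an arbitrary starting point $x$, pick $a\ge\abs{x}$. The same bound gives $P(x,A)\ge\delta\,\Leb(A\cap[-a,a])>0$ whenever $\Leb(A)>0$, after choosing $a$ large enough that $A\cap[-a,a]$ has positive measure. So every Lebesgue-positive set is reached in one step, and the chain is Lebesgue ($\psi$-)irreducible. Strong aperiodicity follows from the minorization above: $D$ is a $\nu_1$-small set with $\nu_1(D)=2a\delta_D\cdot 1>0$, which is the Meyn--Tweedie definition.

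\emph{Main obstacle.} The only delicate point is the measure-theoretic setup: measurability of $\Psi$ on the disjoint-union space $\mathcal M$, and the requirement that $w_{n+1}$ be independent of $(x_n,\mathcal M_n)$. The latter must be made explicit from the assumptions. The uniform compactness of the support of $\mu_x$, which is the key analytic input, comes directly from \eqref{eq:monotonebound}.
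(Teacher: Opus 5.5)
Your proposal is correct and follows the paper's proof in substance. It obtains the Markov property from the kernel structure plus independent noise, gets compact support of $z_n$ from the monotonicity bound $\abs{z_n}\le\abs{x_n}$, and uses a uniform lower bound on the continuous positive density $f$ to get a Lebesgue minorization that yields both smallness and strong aperiodicity. The differences are cosmetic: you minorize with Lebesgue measure on the interval itself rather than on $[-1,1]$, and you derive irreducibility from that minorization on a large enough interval rather than from almost-sure positivity of $\int_A f(y-z_n)\,dy$.
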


\begin{proof}
Under Assumption~\ref{ass:model}, once $x_n=x$ is fixed, the law of $(\Theta_n,\mathcal M_n)$ depends on the past only through $x$, and Assumption~\ref{ass:noise} makes $w_{n+1}$ independent of the previous history. Hence the conditional law of $x_{n+1}=\Psi(x_n,\mathcal M_n)+w_{n+1}$ depends on the past only through $x_n$, so $\{x_n\}$ is Markov and time homogeneous.

Fix $x\in\R$ and a measurable set $A\subseteq\R$ with $\Leb(A)>0$. Since $f$ is strictly positive everywhere and $\Psi(x,\mathcal M_n)$ is finite almost surely,
\[
    \int_A f\bigl(y-\Psi(x,\mathcal M_n)\bigr)\,dy>0
\]
almost surely. Taking conditional expectation given $x_n=x$ yields $\Prob(x_{n+1}\in A\mid x_n=x)>0$, so the chain is Lebesgue irreducible.

Now fix a compact interval $C=[-R,R]$ and let $I=[-1,1]$. If $x_n=x\in C$, Proposition~\ref{prop:servicebound} gives $\abs{z_n}\le \abs{x}\le R$ almost surely. Therefore, for any measurable set $A\subseteq\R$,
\[
    \Prob(x_{n+1}\in A\mid x_n=x)
    =
    \E\!\left[
        \int_A f(y-z_n)\,dy
        \,\middle|\,
        x_n=x
    \right].
\]
Define $m_R\coloneqq \inf\{f(u):\abs{u}\le R+1\}$. By continuity and strict positivity of $f$, one has $m_R>0$. If $y\in I$ and $\abs{z_n}\le R$, then $\abs{y-z_n}\le R+1$, hence $f(y-z_n)\ge m_R$. Consequently,
\[
    \Prob(x_{n+1}\in A\mid x_n=x)
    \ge
    m_R\,\Leb(A\cap I)
    \qquad
    \text{for all }x\in C.
\]
Thus every compact interval is one-step small, and the same minorization implies strong aperiodicity.
\end{proof}

\subsection{Stability Theorem}

\begin{theorem}
\label{thm:drift}
Under Assumptions~\ref{ass:model}--\ref{ass:service}, there exist constants $\alpha\in(0,\eta)$, $\rho\in(0,1)$, $B<\infty$, and $R<\infty$ such that the Lyapunov function
\[
    V(x)\coloneqq e^{\alpha \abs{x}}
\]
satisfies
\begin{equation}
\E\big[V(x_{n+1}) \mid x_n=x\big]
\le \rho V(x)+B\,\1\{\abs{x}\le R\},
\label{eq:drift}
\end{equation}
for all $x\in\R$.
Consequently, the chain admits a unique invariant probability measure $\pi_x$, is positive Harris recurrent, and converges to $\pi_x$ geometrically fast in the norm weighted by $V$. In particular,
\[
    \int_{\R} e^{\alpha \abs{x}}\,\pi_x(dx)<\infty.
\]
\end{theorem}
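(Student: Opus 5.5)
The plan is to establish the drift inequality \eqref{eq:drift} directly and then invoke the standard Meyn--Tweedie geometric ergodicity theorem. Proposition~\ref{prop:smallset} already provides $\psi$-irreducibility, strong aperiodicity, and smallness of compact intervals, so all the Harris-recurrence, uniqueness, and $V$-norm convergence conclusions will follow from the drift condition. The integrability $\int e^{\alpha|x|}\,\pi_x(dx)<\infty$ is then part of the standard $V$-uniform ergodicity conclusion, since $\pi_x(V)\le B/(1-\rho)$ once $\pi_x$ exists.

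The first step is to condition on $z_n$ and use $|x_{n+1}|\le |z_n|+|w_{n+1}|$ together with the independence of $w_{n+1}$ from $(x_n,\mathcal M_n)$. This gives
\[
\E[V(x_{n+1})\mid x_n=x]\le M(\alpha)\,\E\big[e^{\alpha|z_n|}\mid x_n=x\big],
\qquad M(\alpha)\coloneqq \E[e^{\alpha|w_1|}],
\]
which is finite for $\alpha<\eta$ by Assumption~\ref{ass:noise}. Note that $M(\alpha)\downarrow 1$ as $\alpha\downarrow 0$ by dominated convergence.

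Next, split on the event $E=\{C_n\ge\lambda_\star(|x|-\bar\gamma)\}$ for $|x|\ge x_\star$.
\begin{itemize}
\item On $E$, Proposition~\ref{prop:servicebound} gives $(|z_n|-\bar\gamma)_+\le(1-\lambda_\star)(|x|-\bar\gamma)$. Hence $|z_n|\le |x|-\lambda_\star(|x|-\bar\gamma)$, because $|x|>\bar\gamma$.
\item Off $E$, use only $|z_n|\le|x|$.
\end{itemize}
Combining the two cases with Assumption~\ref{ass:service},
\[
\E[e^{\alpha|z_n|}\mid x]\le e^{\alpha|x|}\Big(1-p_\star+p_\star e^{-\alpha\lambda_\star(|x|-\bar\gamma)}\Big).
\]
Since $1-p_\star+p_\star e^{-\alpha\lambda_\star(|x|-\bar\gamma)}$ decreases to $1-p_\star$ as $|x|\to\infty$, the drift factor for large $|x|$ is $M(\alpha)(1-p_\star+p_\star e^{-\alpha\lambda_\star(R-\bar\gamma)})$ for $|x|\ge R$.

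The step that needs care is choosing the constants, because the noise factor $M(\alpha)>1$ must be beaten by a contraction bounded away from $1$. The order of choices is as follows.
\begin{enumerate}
\item If $p_\star<1$, fix $\alpha\in(0,\eta)$ small enough that $M(\alpha)(1-p_\star/2)<1$. This is possible because $M(\alpha)\to1$. If $p_\star=1$, any small $\alpha$ works once $R$ is large.
\item Choose $R\ge x_\star$ large enough that $p_\star e^{-\alpha\lambda_\star(R-\bar\gamma)}\le p_\star/2$.
\item Set $\rho\coloneqq M(\alpha)(1-p_\star/2)<1$. This gives the drift inequality for $|x|>R$.
\end{enumerate}

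For $|x|\le R$, the monotone bound $|z_n|\le|x|$ gives $\E[V(x_{n+1})\mid x]\le M(\alpha)e^{\alpha R}$, so $B\coloneqq M(\alpha)e^{\alpha R}$ suffices and \eqref{eq:drift} holds everywhere. With $[-R,R]$ small by Proposition~\ref{prop:smallset}, the Meyn--Tweedie theorem (Theorem 15.0.1) yields the stated conclusions.
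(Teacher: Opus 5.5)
Your proposal is correct and follows essentially the same route as the paper. Both arguments split on the event $\{C_n\ge\lambda_\star(|x|-\bar\gamma)\}$, applying Proposition~\ref{prop:servicebound} on that event and only $|z_n|\le|x|$ off it. Both then pick $\alpha$ small so that $M_w(\alpha)$ times a factor bounded below $1$ stays below $1$, take $R$ large, use the compact-set bound $B\le e^{\alpha R}M_w(\alpha)$, and conclude with Meyn--Tweedie via Proposition~\ref{prop:smallset}. Your explicit $\rho=M(\alpha)(1-p_\star/2)$ is simply one admissible choice of the paper's $\rho\in(a_\alpha,1)$, and the separate case $p_\star=1$ is unnecessary because $M(\alpha)(1-p_\star/2)<1$ can be achieved for every $p_\star\in(0,1]$.
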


\begin{proof}
The proof follows the economic logic of the model. Outside the dead zone, a successful arbitrage episode contracts the current gap. Assumption~\ref{ass:service} says that such a contraction arrives with uniformly positive probability once the gap is large enough. The disturbance can then be treated as an additive shock on top of that correction mechanism.

For $\alpha\in(0,\eta)$ define the disturbance moment
\[
    M_w(\alpha)\coloneqq \E[e^{\alpha\abs{w_1}}]<\infty.
\]
Fix $x\in\R$ and write
\[
    e(x)\coloneqq \big(\abs{x}-\bar\gamma\big)_+.
\]
Let
\[
    \mathsf A(x)\coloneqq \{C_n\ge \lambda_{\star}e(x)\}
\]
be the event that block $n$ delivers the large correction promised by Assumption~\ref{ass:service}.

On $\mathsf A(x)$, Proposition~\ref{prop:servicebound} gives
\[
    \abs{z_n}
    \le
    \bar\gamma+\bigl(e(x)-\lambda_{\star}e(x)\bigr)_+
    =
    \bar\gamma+(1-\lambda_{\star})e(x).
\]
If $\abs{x}\ge x_{\star}$, then $x_{\star}>\bar\gamma$ implies $e(x)=\abs{x}-\bar\gamma$, and therefore
\begin{equation}
    \abs{z_n}
    \le
    (1-\lambda_{\star})\abs{x}+\lambda_{\star}\bar\gamma.
    \label{eq:serviceeventbound}
\end{equation}
This is the key contraction estimate: on the good event, the post-execution gap is at most an affine contraction of the pre-execution gap.

Using \eqref{eq:stateupdate}, the triangle inequality, and the independence of $w_{n+1}$ from block-$n$ service variables, we obtain on $\mathsf A(x)$,
\begin{equation}
\begin{aligned}
\E\big[V(x_{n+1}) \,\big|\,
   x_n=x,\mathsf A(x)\big]
&= \E\big[e^{\alpha\abs{z_n+w_{n+1}}}
   \,\big|\, x_n=x,\mathsf A(x)\big]
\\
&\le \E\big[e^{\alpha\abs{z_n+w_{n+1}}}
   \,\big|\, x_n=x,\mathsf A(x)\big]
\\
&\le \E\big[e^{\alpha\abs{z_n}+\alpha\abs{w_{n+1}}}
   \,\big|\, x_n=x,\mathsf A(x)\big]
\\
&= \E\big[e^{\alpha\abs{z_n}}
   \,\big|\, x_n=x,\mathsf A(x)\big]\,M_w(\alpha)
\\
&\le M_w(\alpha)\,
   e^{\alpha\big((1-\lambda_{\star})\abs{x}
   +\lambda_{\star}\bar\gamma\big)}.
\end{aligned}
\label{eq:goodbranch}
\end{equation}
On the complement $\mathsf A(x)^c$, we only use the monotonicity part of Proposition~\ref{prop:servicebound}, namely $\abs{z_n}\le\abs{x}$. Hence
\begin{align}
    \E\big[V(x_{n+1})\mid x_n=x,\mathsf A(x)^c\big]
    &\le M_w(\alpha)e^{\alpha\abs{x}}.
    \label{eq:badbranch}
\end{align}

Now mix the two branches. Let
\[
    p(x)\coloneqq \Prob\bigl(\mathsf A(x)\mid x_n=x\bigr).
\]
Combining \eqref{eq:goodbranch} and \eqref{eq:badbranch} yields
\begin{equation}
\begin{aligned}
\E\big[V(x_{n+1})\mid x_n=x\big]
&\le M_w(\alpha)\Big((1-p(x))
   +p(x)e^{\alpha\lambda_{\star}(\bar\gamma-\abs{x})}\Big)
\\
&\qquad\qquad\qquad\qquad\qquad\qquad \cdot V(x).
\end{aligned}
\label{eq:mixture}
\end{equation}
If $\abs{x}\ge x_{\star}$, Assumption~\ref{ass:service} gives $p(x)\ge p_{\star}$, so
\begin{equation}
\begin{aligned}
\E\big[V(x_{n+1})\mid x_n=x\big]
&\le M_w(\alpha)\Big((1-p_{\star})
   +p_{\star}e^{\alpha\lambda_{\star}(\bar\gamma-\abs{x})}\Big)
\\
&\qquad\qquad\qquad\qquad\qquad\qquad \cdot V(x).
\end{aligned}
\label{eq:outsidecompact}
\end{equation}
This inequality already shows the structure of the result. The first term, $M_w(\alpha)(1-p_{\star})$, is the contribution of blocks that do \emph{not} deliver a large correction. The second term decays exponentially in $\abs{x}$ because on the good event the post-execution error is strictly smaller than the pre-execution error.

We now choose the Lyapunov exponent $\alpha$. Since $M_w(0)=1$ and $1-p_{\star}<1$, continuity of $M_w$ at the origin implies that one may pick $\alpha\in(0,\eta)$ small enough that
\[
    a_{\alpha}\coloneqq M_w(\alpha)(1-p_{\star})<1.
\]
Fix any $\rho\in(a_{\alpha},1)$. Because the exponential factor in \eqref{eq:outsidecompact} converges to zero as $\abs{x}\to\infty$, there exists $R\ge x_{\star}$ such that
\[
    M_w(\alpha)
    \Big((1-p_{\star})+p_{\star}e^{\alpha\lambda_{\star}(\bar\gamma-\abs{x})}\Big)
    \le \rho
    \qquad
    \text{for all }\abs{x}>R.
\]
Substituting this into \eqref{eq:outsidecompact} gives the drift bound
\begin{equation}
    \E\big[V(x_{n+1})\mid x_n=x\big]\le \rho V(x)
    \qquad
    \text{for all }\abs{x}>R.
    \label{eq:outsidecompactdrift}
\end{equation}

It remains to control the compact set $\{\abs{x}\le R\}$. If $\abs{x}\le R$, then Proposition~\ref{prop:servicebound} yields $\abs{z_n}\le\abs{x}\le R$. Therefore
\[
    \abs{x_{n+1}}
    =\abs{z_n+w_{n+1}}
    \le \abs{z_n}+\abs{w_{n+1}}
    \le R+\abs{w_{n+1}}.
\]
Hence
\[
    V(x_{n+1})\le e^{\alpha R}e^{\alpha\abs{w_{n+1}}},
\]
and so
\[
    B\coloneqq \sup_{\abs{x}\le R}\E\big[V(x_{n+1})\mid x_n=x\big]
    \le e^{\alpha R}M_w(\alpha)<\infty.
\]
Combining this compact-set bound with \eqref{eq:outsidecompactdrift} proves \eqref{eq:drift}.

Finally, Proposition~\ref{prop:smallset} shows that the chain is Lebesgue irreducible, strongly aperiodic, and has compact small sets. Standard Foster--Lyapunov theory therefore implies positive Harris recurrence and geometric ergodicity \cite{meyn2009markov}. The uniqueness of the invariant measure and the exponential moment bound under $\pi_x$ follow from the same drift criterion.
\end{proof}

The theorem is the main result of the paper. Outside a compact operating tube, the block-scale tracking loop has strict negative exponential drift. The quantity $\lambda_{\star}p_{\star}$ plays the role of an effective correction strength: $\lambda_{\star}$ measures how much of the excess gap is removed when the correction arrives, while $p_{\star}$ measures how reliably that correction is delivered.

\subsection{Two Short Corollaries}

The first corollary is the one-step bound used later in the empirical section.

\begin{corollary}[One-step excess recursion]
\label{cor:excessrecursion}
Let
\begin{equation}
    e_n\coloneqq \big(\abs{x_n}-\bar\gamma\big)_+.
    \label{eq:excessdef}
\end{equation}
Under Assumption~\ref{ass:model},
\begin{equation}
    e_{n+1}
    \le
    (e_n-C_n)_+ + \abs{w_{n+1}}
    \qquad
    \text{almost surely}.
    \label{eq:detexcess}
\end{equation}
If $\mu_w\coloneqq \E[\abs{w_1}]<\infty$, then for any $\lambda\in(0,1]$,
\begin{equation}
    \E[e_{n+1}\mid x_n=x]
    \le
    \bigl(1-\lambda p_{\lambda}(x)\bigr)e(x)+\mu_w,
    \label{eq:excessrecursion}
\end{equation}
where $e(x)=(\abs{x}-\bar\gamma)_+$ and
\[
    p_{\lambda}(x)\coloneqq \Prob\big(C_n\ge \lambda e(x)\mid x_n=x\big).
\]
In particular, under Assumption~\ref{ass:service},
\begin{equation}
    \E[e_{n+1}\mid x_n=x]
    \le
    \bigl(1-\lambda_{\star}p_{\star}\bigr)e(x)+\mu_w,
    \qquad
    \abs{x}\ge x_{\star}.
    \label{eq:excessrecursion_uniform}
\end{equation}
\end{corollary}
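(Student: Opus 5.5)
The plan is to derive the almost-sure inequality \eqref{eq:detexcess} directly from Proposition~\ref{prop:servicebound} and the triangle inequality, and then to take conditional expectations, splitting on whether the block delivers a large correction.

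\emph{Deterministic step.} By \eqref{eq:stateupdate}, $\abs{x_{n+1}}\le\abs{z_n}+\abs{w_{n+1}}$. The map $t\mapsto(t-\bar\gamma)_+$ is nondecreasing and $1$-Lipschitz, so
\[
e_{n+1}=(\abs{x_{n+1}}-\bar\gamma)_+\le(\abs{z_n}-\bar\gamma)_+ + \abs{w_{n+1}}.
\]
Under Assumption~\ref{ass:model} the dead-zone cap holds in every block, so \eqref{eq:excessservice} applies. It gives $(\abs{z_n}-\bar\gamma)_+\le (e_n-C_n)_+$, which yields \eqref{eq:detexcess} almost surely.

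\emph{Expectation step.} Fix $x$ and $\lambda\in(0,1]$, and let $\mathsf A_\lambda=\{C_n\ge\lambda e(x)\}$. On $\mathsf A_\lambda$ we have $(e(x)-C_n)_+\le (1-\lambda)e(x)$. On the complement we use the trivial bound $(e(x)-C_n)_+\le e(x)$, since $C_n\ge0$. Taking conditional expectation given $x_n=x$ gives
\[
\E[(e(x)-C_n)_+\mid x_n=x]\le p_\lambda(x)(1-\lambda)e(x)+(1-p_\lambda(x))e(x)=(1-\lambda p_\lambda(x))e(x).
\]
For the noise term, $w_{n+1}$ is independent of $x_n$ and of the block-$n$ variables (Assumption~\ref{ass:noise} together with the generative structure in Assumption~\ref{ass:model}). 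Hence $\E[\abs{w_{n+1}}\mid x_n=x]=\mu_w$. Adding the two bounds gives \eqref{eq:excessrecursion}.

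\emph{Uniform version.} For $\abs{x}\ge x_\star$, set $\lambda=\lambda_\star$. Assumption~\ref{ass:service} gives $p_{\lambda_\star}(x)\ge p_\star$, and the coefficient $1-\lambda_\star p$ is nonincreasing in $p$ while $e(x)\ge0$. Therefore $(1-\lambda_\star p_{\lambda_\star}(x))e(x)\le(1-\lambda_\star p_\star)e(x)$, which is \eqref{eq:excessrecursion_uniform}.

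There is no real obstacle here. The only point needing care is justifying that $\E[\abs{w_{n+1}}\mid x_n=x]=\mu_w$, that is, that the innovation is independent of the current state and block. This follows from the i.i.d. assumption and the generative structure, as already used in the proof of Theorem~\ref{thm:drift}. One also needs the Lipschitz/monotonicity property of the positive part.
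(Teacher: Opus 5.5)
Your proof is correct and follows essentially the same route as the paper. You use Proposition~\ref{prop:servicebound} together with the nondecreasing, 1-Lipschitz positive part to get the almost-sure bound, split on $\{C_n\ge\lambda e(x)\}$ for the conditional expectation, and invoke Assumption~\ref{ass:service} for the uniform version, and you are more careful than the paper in noting that $\E[\abs{w_{n+1}}\mid x_n=x]=\mu_w$ relies on the independence built into Assumptions~\ref{ass:model}--\ref{ass:noise}; the only detail you leave implicit is that $x_\star>\bar\gamma$ gives $e(x)=\abs{x}-\bar\gamma$, so the event in Assumption~\ref{ass:service} coincides with $\{C_n\ge\lambda_\star e(x)\}$.
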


\begin{proof}
Equation \eqref{eq:detexcess} follows from Proposition~\ref{prop:servicebound} and the elementary inequality $(\abs{a+b}-\bar\gamma)_+\le (\abs{a}-\bar\gamma)_+ + \abs{b}$. Conditioning on the event $\{C_n\ge \lambda e(x)\}$ then gives \eqref{eq:excessrecursion}. The uniform bound \eqref{eq:excessrecursion_uniform} follows from Assumption~\ref{ass:service}.
\end{proof}

The second corollary is the local certification interface. It is shorter than the main theorem but useful in practice because it turns a local lower bound on realized correction into a local contraction rate.

\begin{corollary}[Local contraction proxy]
\label{cor:localrate}
Fix a measurable bin $\mathcal B\subseteq\R\times\mathcal G\times(0,\infty)$ and a threshold $R>\bar\gamma$. Suppose there exist numbers $p_{\mathcal B}\in(0,1]$ and $\lambda_{\mathcal B}\in(0,1]$ such that
\[
    \mathsf Q\!\Bigl(
        \bigl\{m\in\mathcal M:C(x,m)\ge \lambda_{\mathcal B}(\abs{x}-\bar\gamma)\bigr\}
        \,\Big|\,
        x,g,\ell
    \Bigr)
    \ge p_{\mathcal B}
\]
for all $(x,g,\ell)\in\mathcal B$ with $\abs{x}\ge R$. Then for every $\alpha\in(0,\eta)$,
\begin{equation}
    \E\big[V(x_{n+1})\mid x_n=x,g_n=g,\ell_n=\ell\big]
    \le
    \rho_{\mathcal B}(R,\alpha)V(x),
    \label{eq:explicitrate}
\end{equation}
where
\begin{equation}
    \rho_{\mathcal B}(R,\alpha)
    \coloneqq
    M_w(\alpha)\Bigl((1-p_{\mathcal B}) + p_{\mathcal B}e^{\alpha\lambda_{\mathcal B}(\bar\gamma-R)}\Bigr).
    \label{eq:rhoB}
\end{equation}
Hence $\rho_{\mathcal B}(R,\alpha)<1$ certifies local contraction in exponential moment.
\end{corollary}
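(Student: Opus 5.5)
The plan is to repeat the two-branch argument from the proof of Theorem~\ref{thm:drift}, now conditioning on the full block environment $(x_n,g_n,\ell_n)=(x,g,\ell)\in\mathcal B$ instead of on $x_n$ alone. Fix such a point with $\abs{x}\ge R$. Since $R>\bar\gamma$, we have $e(x)=\abs{x}-\bar\gamma$.

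Let $\mathsf A\coloneqq\{C(x,\mathcal M_n)\ge\lambda_{\mathcal B}(\abs{x}-\bar\gamma)\}$. Under Assumption~\ref{ass:model}, $\mathcal M_n$ has law $\mathsf Q(\cdot\mid x,g,\ell)$, so the hypothesis gives $p\coloneqq\Prob(\mathsf A\mid x,g,\ell)\ge p_{\mathcal B}$. The two branches are handled as follows.
\begin{itemize}
\item On $\mathsf A$, Proposition~\ref{prop:servicebound} gives $\abs{z_n}\le\bar\gamma+(1-\lambda_{\mathcal B})(\abs{x}-\bar\gamma)$. This is the analogue of \eqref{eq:serviceeventbound}.
\item On $\mathsf A^c$, we use only $\abs{z_n}\le\abs{x}$.
\end{itemize}

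The innovation $w_{n+1}$ is independent of $(x_n,\Theta_n,\mathcal M_n)$. This independence is the point to check: in Assumptions~\ref{ass:model}--\ref{ass:noise}, $w_{n+1}$ is drawn independently of the block-$n$ environment and list. Granting it, $\abs{x_{n+1}}\le\abs{z_n}+\abs{w_{n+1}}$ factors the conditional expectation on each branch as $M_w(\alpha)\,\E[e^{\alpha\abs{z_n}}\mid\cdot]$. Mixing the branches gives
\[
\E[V(x_{n+1})\mid x,g,\ell]\le M_w(\alpha)\bigl((1-p)+p\,e^{\alpha\lambda_{\mathcal B}(\bar\gamma-\abs{x})}\bigr)V(x).
\]
Here $M_w(\alpha)<\infty$ because $\alpha<\eta$ and Assumption~\ref{ass:noise} holds.

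Two monotonicity steps reduce this to \eqref{eq:rhoB}. First, $\bar\gamma-\abs{x}\le 0$, so the exponential factor $q(x)\coloneqq e^{\alpha\lambda_{\mathcal B}(\bar\gamma-\abs{x})}$ lies in $(0,1]$. The bracket $(1-p)+p\,q(x)=1-p(1-q(x))$ is therefore nonincreasing in $p$, and we may replace $p$ by its lower bound $p_{\mathcal B}$. Second, $q(x)$ is decreasing in $\abs{x}$, so $\abs{x}\ge R$ gives $q(x)\le e^{\alpha\lambda_{\mathcal B}(\bar\gamma-R)}$. Together these yield \eqref{eq:explicitrate}. The final sentence of the corollary is then immediate, since $\rho_{\mathcal B}<1$ means $V$ strictly contracts in conditional expectation on the bin.

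The only delicate step is replacing $p$ by $p_{\mathcal B}$. It needs $q(x)\le 1$, which is why the bin threshold is required to satisfy $R>\bar\gamma$.
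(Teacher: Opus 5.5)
Your proof is correct and follows the paper's route. The paper's proof simply says to rerun the two-branch argument of Theorem~\ref{thm:drift}, replacing $(p_{\star},\lambda_{\star},x_{\star})$ by $(p_{\mathcal B},\lambda_{\mathcal B},R)$ and conditioning on the local correction event, and your proof does exactly this while also spelling out the two monotonicity steps the paper leaves implicit (replacing $p$ by $p_{\mathcal B}$ via $q(x)\le 1$, and $\abs{x}$ by $R$ in the exponent).
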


\begin{proof}
Repeat the proof of Theorem~\ref{thm:drift} with $p_{\star},\lambda_{\star},x_{\star}$ replaced by $p_{\mathcal B},\lambda_{\mathcal B},R$ and condition on the local event $\{C_n\ge \lambda_{\mathcal B}(\abs{x}-\bar\gamma)\}$.
\end{proof}

\section{Empirical Calibration and Simulation}
\label{sec:emp}

Section~\ref{sec:stability} reduces the closed-loop question to three theorem quantities: the dead-zone cap $\bar\gamma$, the large-error threshold $x_{\star}$, and the large-error correction pair $(\lambda_{\star},p_{\star})$. The empirical section uses realized block data to build simple proxies for those quantities and then feeds them directly into a reduced simulation. The goal is not to re-prove the theorem from data. The goal is to show that the correction pattern required by the theorem is visible on chain and that a simulation driven by the same quantities behaves the way the theory says it should.

\subsection{What We Measure on Chain}

We scan Ethereum mainnet blocks 16800007 through 16803511 and identify 397 simple arbitrage observations with usable pre- and post-execution measurements. Ethereum mainnet is a natural setting for this study because it hosts economically significant AMM trading and arbitrage activity \cite{zhou2021hft,heimbach2024nonatomic,zhu2024liquidity,gogol2024l2arb}. We reconstruct pair-specific event windows around arbitrage activity. For each simple arbitrage observation, we measure a pre-execution gap $e_n^{\mathrm{pre}}$ and a post-execution gap $e_n^{\mathrm{post}}$ against a reference pool, both in log-price units. The realized within-block correction is then
\begin{equation}
    S_n\coloneqq e_n^{\mathrm{pre}}-e_n^{\mathrm{post}}.
    \label{eq:empservice}
\end{equation}
This is the empirical counterpart of the theoretical correction term $C_n$.

We also use the next block to build a disturbance proxy. Writing $e_n=(e_n^{\mathrm{pre}}-\bar\gamma)_+$, the one-step inequality in Corollary~\ref{cor:excessrecursion} suggests comparing the realized next-block excess $e_{n+1}$ with $(e_n-S_n)_+$. The remaining difference is treated as an empirical proxy for the next-block disturbance magnitude.

\subsection{From Data to Theorem Parameters}

The dead-zone proxy is set from the lower end of the observed gap distribution. In the calibration used below, we take
\[
    \bar\gamma \approx 3.82\times 10^{-4},
\]
which is half of the first-quartile median pre-execution gap. This choice is intentionally conservative: it treats the smallest routinely observed gaps as economically inactive for the purpose of the reduced model.

For the large-error threshold, we use the sample median of $e_n^{\mathrm{pre}}$. This split is simple and transparent, and it matches the logic of Assumption~\ref{ass:service}, which only asks for reliable correction once the gap is already meaningfully outside the operating tube.

The remaining quantities come from a direct question: for a given required correction share $\lambda$, how often does a block remove at least that share of the current gap? In symbols, we estimate
\begin{equation}
    \hat p(\lambda)
    \coloneqq
    \Prob\bigl(S_n\ge \lambda e_n^{\mathrm{pre}}\bigr)
    \label{eq:phatlambdatext}
\end{equation}
on the large-error half of the sample. Figure~\ref{fig:sec4-correctioncurve} plots this curve for both large and small errors. The difference is clear. Large gaps are corrected more often, and they are also corrected more deeply.

To keep the baseline rule fixed and transparent, we do not choose $\hat\lambda_{\star}$ after looking at the simulation output. Instead we restrict attention to the reference levels $\{0.25,0.50,0.75\}$ and choose the largest one whose large-error delivery probability still exceeds $70\%$. In our sample this rule selects
\[
    \hat\lambda_{\star}=0.50,
    \qquad
    \hat p_{\star}=0.729,
\]
because the large-error shares at $\lambda=0.25$, $0.50$, and $0.75$ are $85.4\%$, $72.9\%$, and $54.8\%$, respectively. The rule is meant to capture the most ambitious correction target that is still supported by a clear empirical majority of large-error blocks.

\begin{figure}[t]
\centering
\includegraphics[width=\columnwidth]{\detokenize{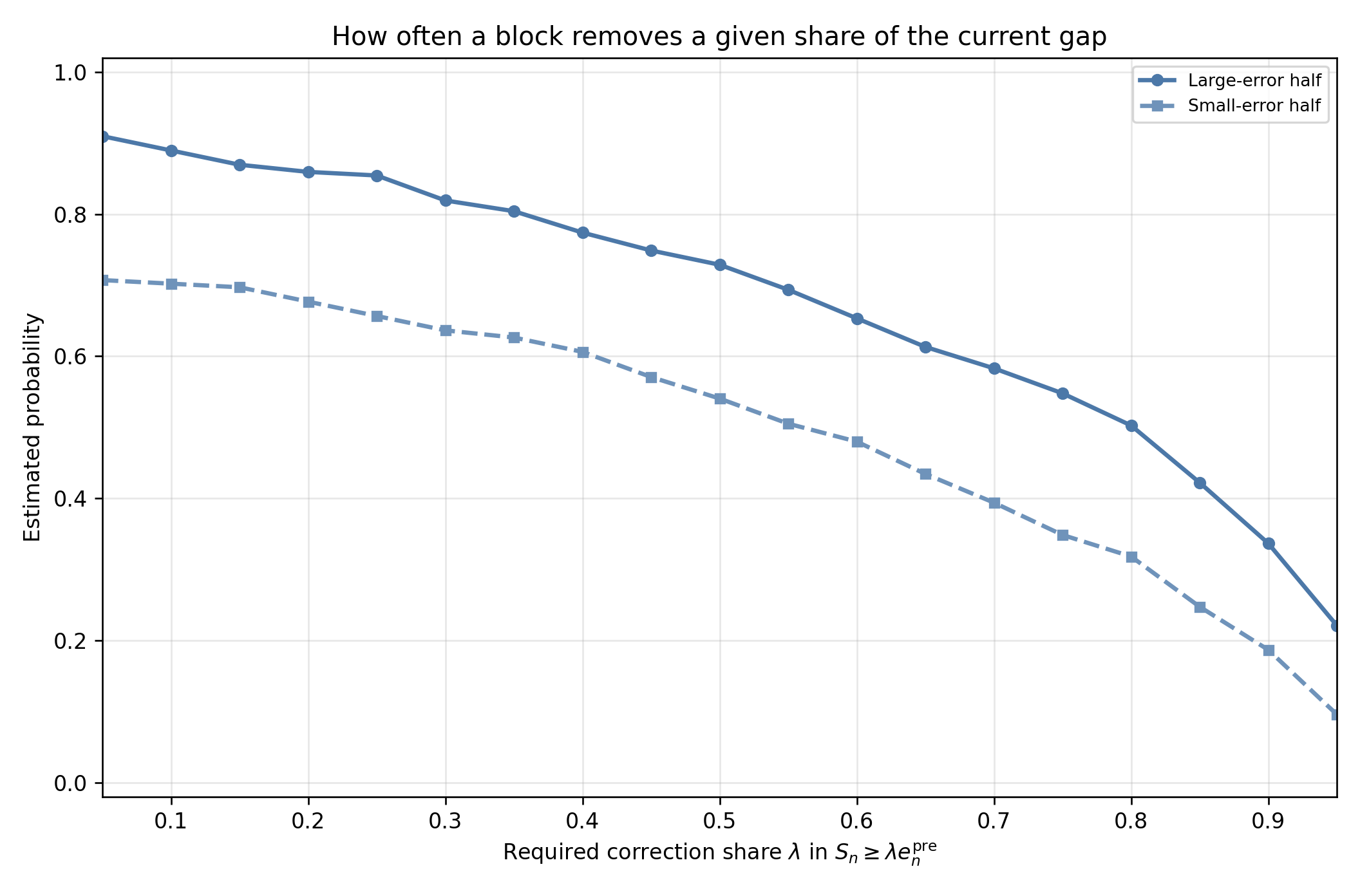}}
\caption{Probability that a block removes at least a given share of the current gap. The vertical axis plots $\hat p(\lambda)=\Prob(S_n\ge \lambda e_n^{\mathrm{pre}})$. Large-error observations dominate small-error observations across the whole range.}
\label{fig:sec4-correctioncurve}
\end{figure}

Table~\ref{tab:sec4-calibration} summarizes the main calibration quantities. Table~\ref{tab:sec4-robust} adds two short robustness checks that matter for interpretation: the quartile ranking is monotone in the way already visible in the data, and the one-step proxy implied by Corollary~\ref{cor:excessrecursion} becomes exact in sample once the empirical disturbance proxy is added.

\begin{table}[t]
\caption{Data-guided theorem quantities used in the simulation.}
\label{tab:sec4-calibration}
\centering
\footnotesize
\setlength{\tabcolsep}{4pt}
\renewcommand{\arraystretch}{0.97}
\begin{tabular}{p{0.63\columnwidth}r}
\hline
Quantity & Value \\
\hline
Dead-zone proxy $\bar\gamma$ & $3.82\times 10^{-4}$ \\
Large-error threshold $x_{\star}$ & median of $e_n^{\mathrm{pre}}$ \\
Large-error correction share $\hat\lambda_{\star}$ & $0.50$ \\
Delivery probability at $\hat\lambda_{\star}$ & $72.9\%$ \\
Large-error share at $\lambda=0.25$ & $85.4\%$ \\
Large-error share at $\lambda=0.75$ & $54.8\%$ \\
Overall positive correction ratio & $81.1\%$ \\
\hline
\end{tabular}
\end{table}

\begin{table}[t]
\caption{Simple robustness checks for the calibration interface.}
\label{tab:sec4-robust}
\centering
\footnotesize
\setlength{\tabcolsep}{4pt}
\renewcommand{\arraystretch}{0.97}
\begin{tabular}{p{0.70\columnwidth}r}
\hline
Statistic & Value \\
\hline
Overall share $e_{n+1}\le (e_n-S_n)_+$ & $86.4\%$ \\
Overall share $e_{n+1}\le (e_n-S_n)_+ + \widehat w_n$ & $100.0\%$ \\
Tracked-token share $e_{n+1}\le (e_n-S_n)_+ + \widehat w_n$ & $100.0\%$ \\
Q1 / Q4 positive-correction ratio & $66.0\%$ / $97.0\%$ \\
Q1 / Q4 median relative correction & $0.473$ / $0.874$ \\
\hline
\end{tabular}
\end{table}

The tracked-token subsample formed by WETH/USDC, WETH/USDT, and USDT/USDC tells the same story, so the large-error correction pattern is not driven by a single pair. The next-block check also lines up with the one-step recursion. The raw proxy
\[
    e_{n+1}\le (e_n-S_n)_+
\]
holds for $86.4\%$ of the overall sample. Once we add the empirical disturbance proxy $\widehat w_n$, the share rises to $100.0\%$. The same happens on the tracked-token subsample, where the corresponding share with the disturbance proxy is also $100.0\%$. This is exactly the pattern suggested by Corollary~\ref{cor:excessrecursion}: realized correction explains most of the one-step movement in the gap, and the rest is well described as a next-block innovation term.

\subsection{Reduced Simulation Driven by the Same Parameters}

The theorem does not need a detailed model of the searcher game. It only needs a dead-zone cap and a large-error correction rule. The simulation therefore follows the same logic. We simulate the block-scale update
\[
    x_{n+1}=z_n+w_{n+1},
\]
keep the sequential execution rule of Section~\ref{sec:model}, and let the realized correction in each block depend on three ingredients: the current gap, a local liquidity state, and an execution-quality state. Above the threshold $x_{\star}$, the baseline correction law is anchored at the empirical pair $(\hat\lambda_{\star},\hat p_{\star})$. Below that threshold, the correction law is interpolated from the smaller-gap part of the sample. Strong-service and weak-service cases are then created by shifting the same quantities up and down around the empirical baseline.

Figure~\ref{fig:sec4-tracking} shows the resulting time paths under a common disturbance sequence with intermittent shocks. The baseline case keeps the AMM price close to the reference price and typically returns to the operating tube quickly after shocks. Stronger correction improves that behavior further, while weaker correction leaves the process outside the tube for longer. Mean excess error is $6.55\times 10^{-3}$ in the weak-service case, $3.13\times 10^{-3}$ in the baseline case, and $1.71\times 10^{-3}$ in the strong-service case. The message is the same as in Theorem~\ref{thm:drift}: when the large-error correction pair improves, tracking becomes tighter and recovery becomes faster.

\begin{figure}[t]
\centering
\includegraphics[width=\columnwidth]{\detokenize{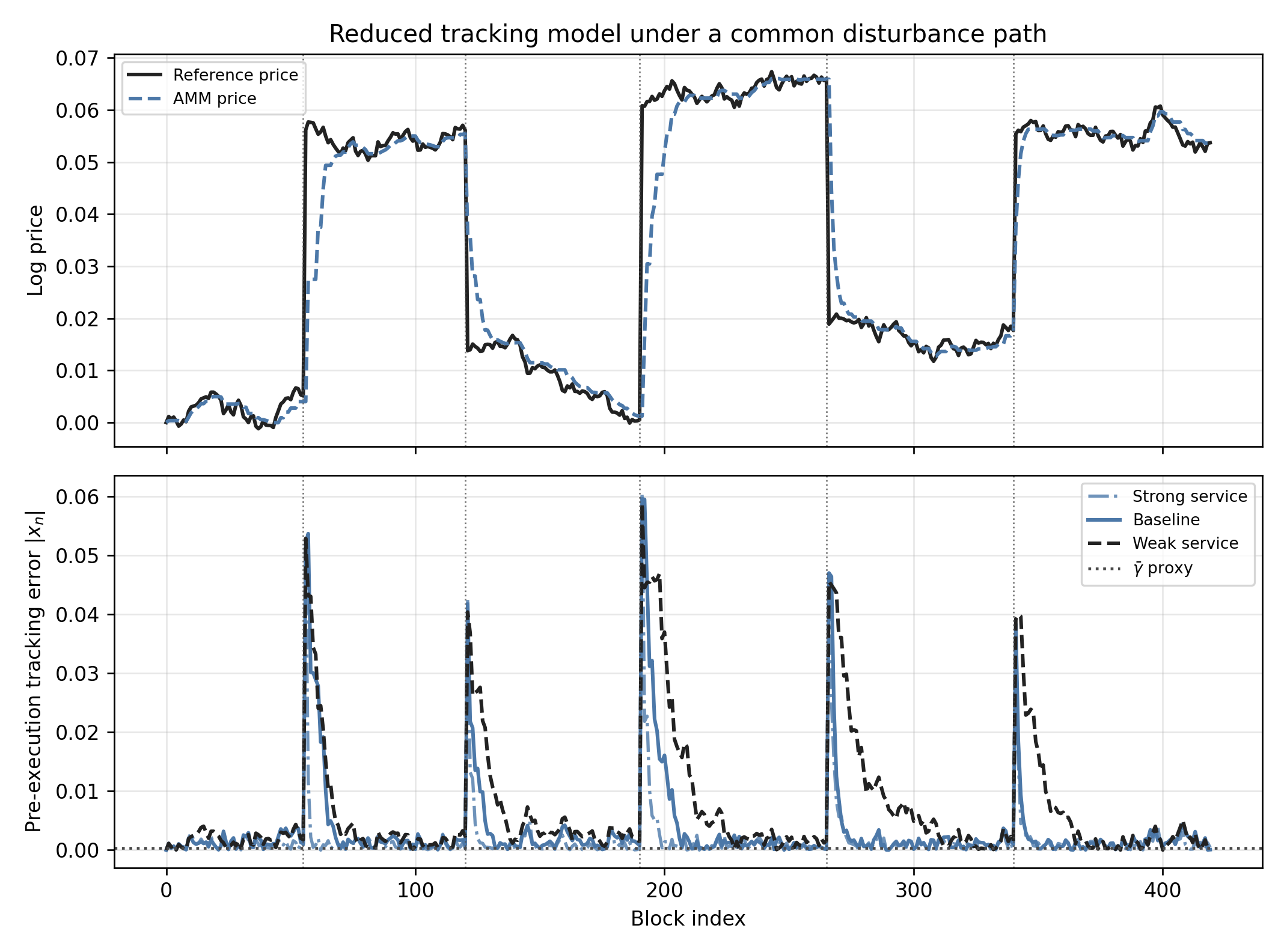}}
\caption{Reduced simulation driven by the data-guided theorem quantities. Top: reference and AMM log-prices in the baseline case under intermittent shocks. Bottom: pre-execution tracking error under strong-service, baseline, and weak-service cases.}
\label{fig:sec4-tracking}
\end{figure}

To make the role of the theorem parameters explicit, we also sweep the pair $(\lambda,p)$ directly. For each point on the grid, we rerun the reduced model and record the mean excess error. Figure~\ref{fig:sec4-grid} shows a clean monotone picture: larger correction shares and more reliable delivery both improve tracking. The empirically calibrated baseline sits well inside the low-error region. The same figure also overlays the simple local contraction proxy from Corollary~\ref{cor:localrate}, which gives a direct visual bridge between the theorem and the simulation.

\begin{figure}[t]
\centering
\includegraphics[width=\columnwidth]{\detokenize{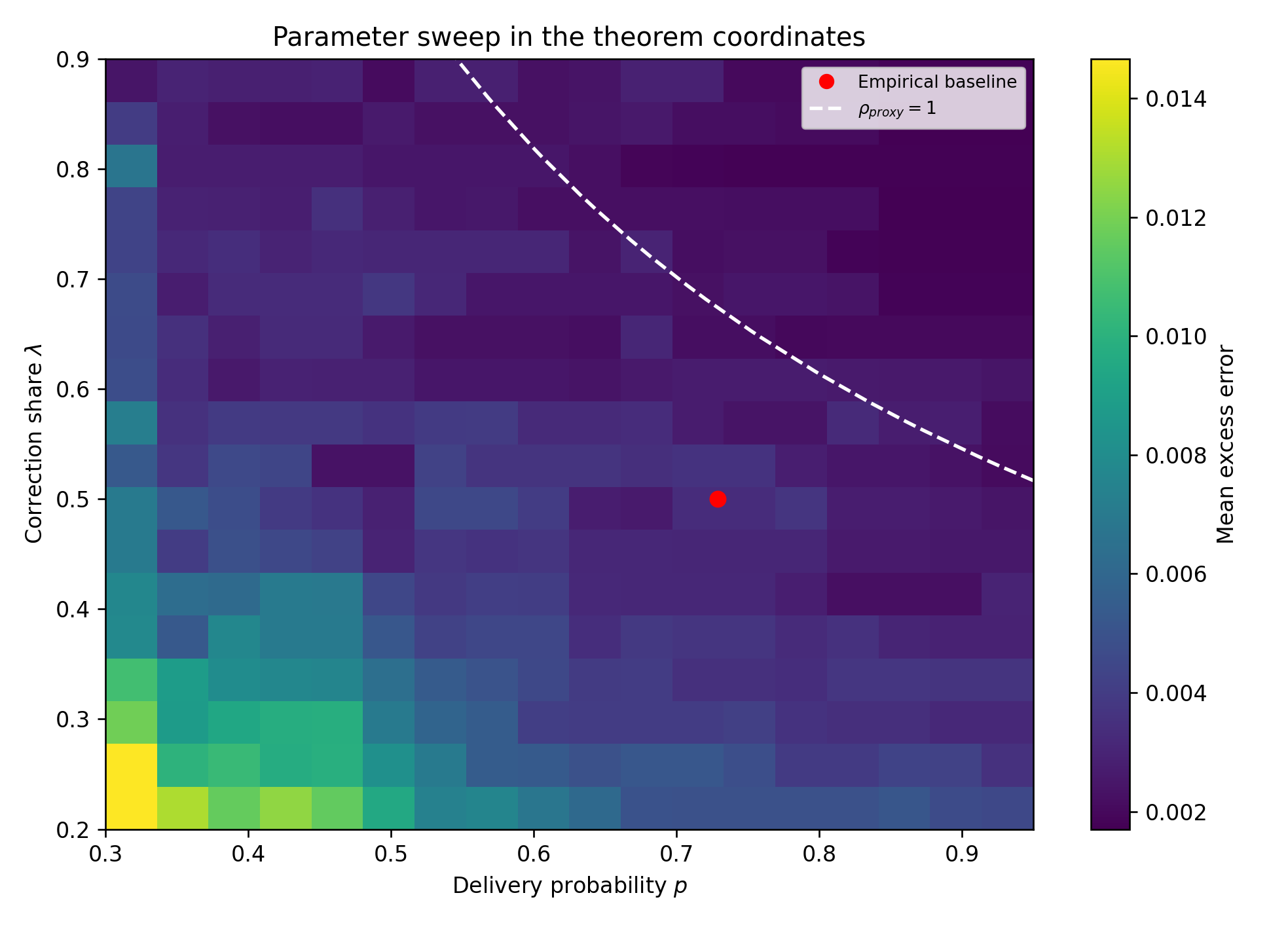}}
\caption{Parameter sweep in the theorem coordinates. Each point corresponds to a large-error correction share $\lambda$ and delivery probability $p$. The color shows the mean excess error in the reduced simulation. The marker indicates the empirical baseline, and the dashed contour shows the approximate boundary where the local contraction proxy equals one.}
\label{fig:sec4-grid}
\end{figure}

\subsection{Constant-Product Mechanism Simulation}

Finally, we add a custom mechanism simulation that does not use chain calibration. This exercise is included to support the mechanism interpretation of Proposition~\ref{prop:cpband}, not as a separate empirical test of Theorem~\ref{thm:drift}. There we use realized block data to proxy the theorem quantities. Here we use Proposition~\ref{prop:cpband} directly to see how reserve depth and trading frictions move those quantities in a canonical AMM family. We simulate a constant-product pool with fee multiplier $\eta=0.997$, reference-price shocks, and blockwise optimal corrective trades computed from \eqref{eq:qplusstar}--\eqref{eq:Hminuscp}. Reserve depth enters through the local proxy $\ell_n\propto \sqrt{x_n y_n}$, while a fixed execution cost moves the economic no-trade radius.

Because that radius itself changes across specifications, the summary statistic in Figure~\ref{fig:sec4-cpmechanism} is the mean absolute pre-execution log-gap rather than excess outside a fixed band. Under a common disturbance path and a common fee, the mean absolute pre-execution gap falls from $1.15\times 10^{-2}$ to $8.67\times 10^{-3}$ and $6.42\times 10^{-3}$ as the reserve-depth scale rises from $0.5$ to $1.0$ and $2.0$. The right panel extends the same mechanism across a joint grid of reserve depth and fixed execution cost. The lowest gaps occur in the deep-liquidity, low-cost corner, while the highest gaps occur in the shallow-liquidity, high-cost corner. This is exactly the comparative static suggested by Proposition~\ref{prop:cpband}: depth makes corrective trading easier, while higher fixed cost widens the no-trade region.

\begin{figure}[t]
\centering
\includegraphics[width=\columnwidth]{\detokenize{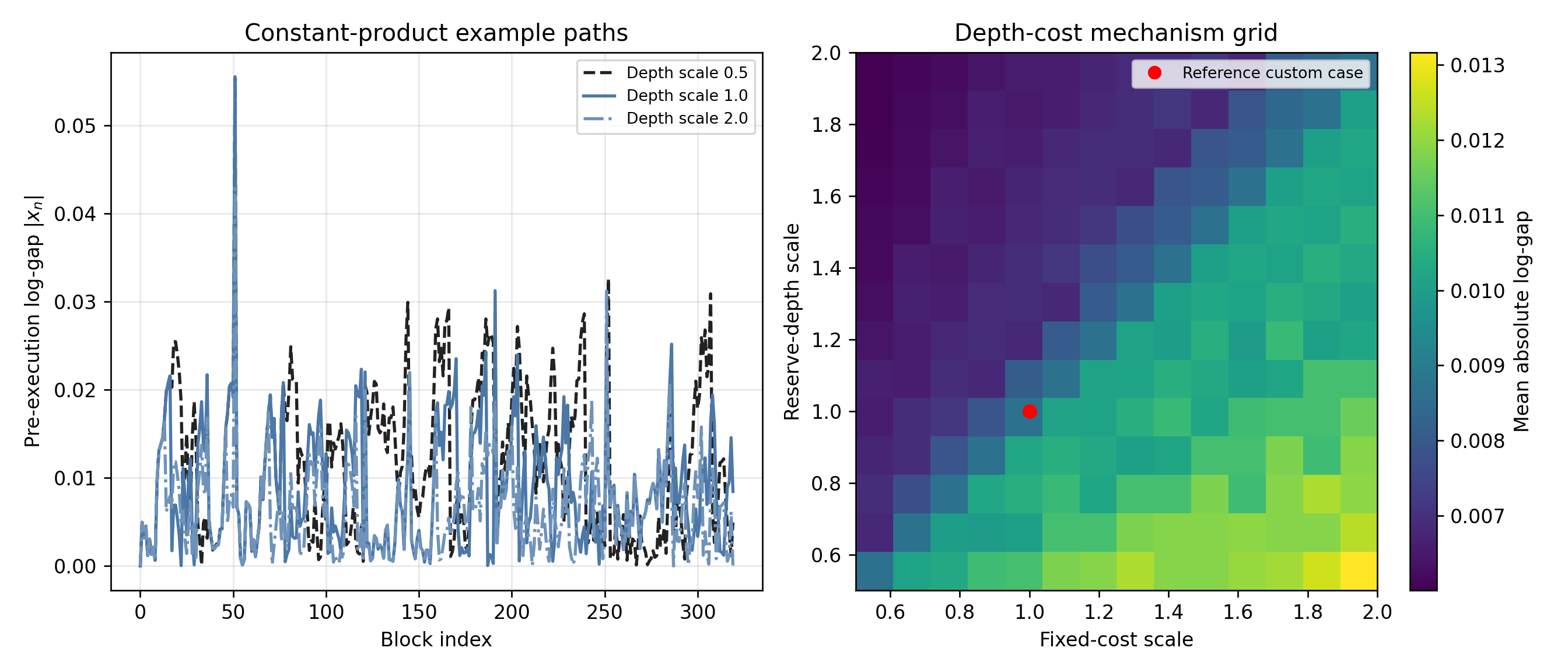}}
\caption{Custom mechanism simulation based on the constant-product example. Left: pre-execution log-gap paths under a common shock sequence for reserve-depth scales $0.5$, $1.0$, and $2.0$. Right: mean absolute pre-execution log-gap on a joint grid of reserve depth and fixed execution cost.}
\label{fig:sec4-cpmechanism}
\end{figure}

This final step matters for interpretation. The reduced simulation is not a separate story from the theorem, and neither is the constant-product mechanism exercise. The first uses data-guided parameter values to organize block-scale correction scenarios. The second moves depth and friction directly inside a canonical AMM family. Together they show the same comparative statics as the theorem without claiming a sharp empirical validation of the theorem itself.

\section{Conclusion}
\label{sec:conclusion}

This paper studies AMM price tracking at block scale with
arbitrage as the corrective input and blockchain execution
inside the loop. The execution layer is reduced to the realized
correction delivered in each block together with a dead-zone
cap. Under a large-error correction condition, this is enough
to prove geometric ergodicity of the tracking error and to
obtain explicit one-step bounds that connect tracking quality
to liquidity and execution quality. The empirical and simulation
sections use proxies for the theorem quantities built from
realized block data and use them to organize reduced and
mechanism-focused simulations. The contribution is therefore
to turn the usual economic intuition that arbitrage keeps AMM
prices aligned into a quantitative stability statement together
with a tractable calibration interface.

\bibliographystyle{IEEEtran}
\bibliography{cdc_first_three_sections}

\end{document}